\documentclass[11pt,a4paper]{amsart}

\usepackage[english]{babel}

\usepackage[top=2cm,bottom=2cm,left=3cm,right=3cm,marginparwidth=1.75cm]{geometry}
\usepackage{subcaption}
\usepackage{tabularx}
\usepackage{amsmath,amssymb,bm,color}
\usepackage[english]{babel}
\usepackage{mathtools}
\usepackage{enumerate}
\usepackage[numbers]{natbib}
\usepackage[noend]{algpseudocode}
  
\usepackage{multirow}
\usepackage{xcolor}
\usepackage{float}
\usepackage[section]{placeins}

\usepackage{pgfplots}
\pgfplotsset{compat=newest}
\usepackage{tikz}
\usetikzlibrary{arrows,matrix,positioning,calc}
\usepackage[utf8]{inputenc}

\usepackage[T1]{fontenc}
\usepackage{epsfig}

\usepackage{mathdots}
\usepackage{xspace}

\usepackage[linesnumbered,ruled,vlined,titlenumbered]{algorithm2e}
\usepackage{color}
\usepackage{booktabs}
\usepackage{verbatim}
\usepackage[hidelinks]{hyperref}

\usepackage{lipsum}
\usepackage{enumitem}
\usepackage{colortbl}

\newtheorem{theorem}{Theorem}
\newtheorem{problem}{Problem}

\newtheorem{proposition}[theorem]{Proposition}

\theoremstyle{definition}

\newtheorem{definition}[theorem]{Definition}
\theoremstyle{remark}

\usepackage[final,tracking=true,kerning=true,spacing=true,factor=1100,stretch=10,shrink=20]{microtype}

\renewcommand{\vec}[1]{\ensuremath{\mathbf{#1}}}
  
\renewcommand{\S}{\ensuremath{\vec{S}}}
\renewcommand{\H}{\ensuremath{\vec{H}}}
\newcommand{\0}{\ensuremath{\vec{0}}}
\newcommand{\e}{\ensuremath{\vec{e}}}

\newcommand{\Fq}{\ensuremath{\mathbb{F}_q}}

\DeclareMathOperator{\rank}{rk}
\DeclareMathOperator{\wtH}{wt}

\DeclareMathOperator{\supp}{supp}

\renewcommand{\c}{\vec{c}}
\renewcommand{\e}{\vec{e}}

\renewcommand{\r}{\vec{r}}
\newcommand{\s}{\vec{s}}

\newcommand{\x}{\vec{x}}

\newcommand{\C}{\vec{C}}

 \newcommand{\E}{\vec{E}}
\newcommand{\R}{\vec{R}}
\newcommand{\X}{\vec{X}}

\newcommand{\G}{\vec{G}}

\newcommand{\M}{\vec{M}}

\newcommand{\bbN}{\mathbb{N}}

\newcommand{\cC}{\mathcal{C}}

\newcommand{\cI}{\mathcal{I}}
\newcommand{\cJ}{\mathcal{J}}
\newcommand{\cL}{\mathcal{L}}

\newcommand{\cU}{\mathcal{U}}

\newcommand{\cT}{\mathcal{T}}


\newcommand{\GsupCode}{\G'}
\newcommand{\HsupCode}{\H'}

\newcommand{\Cl}{\cC_ \ell}

\newcommand{\gb}{\genfrac{[}{]}{0pt}{}}
 \pagestyle{plain}

\title{Interleaved Prange: A New Generic Decoder for Interleaved Codes}
\begin{document}
 
	\author[A. Porwal]{Anmoal Porwal}
	\address{Department of Electrical and Computer Engineering\\
		Technical University of Munich\\
 Germany
	}
	\email{anmoal.porwal@tum.de}
	\author[L. Holzbaur]{Lukas Holzbaur}
	\address{Department of Electrical and Computer Engineering\\
		Technical University of Munich\\
 Germany
	}
	\email{lukas.holzbaur@tum.de}
	\author[H. Liu]{Hedongliang Liu}
	\address{Department of Electrical and Computer Engineering\\
		Technical University of Munich\\
 Germany
	}
	\email{lia.liu@tum.de}
	\author[J. Renner]{Julian Renner}
	\address{Department of Electrical and Computer Engineering\\
		Technical University of Munich\\
 Germany
	}
	\email{julian.renner@tum.de}
	\author[A. Wachter-Zeh]{Antonia Wachter-Zeh}
	\address{Department of Electrical and Computer Engineering\\
		Technical University of Munich\\
 Germany
	}
	\email{antonia.wachter-zeh@tum.de}
 
	\author[V. Weger]{Violetta Weger}
	\address{Department of Electrical and Computer Engineering\\
		Technical University of Munich\\
 Germany
	}
	\email{violetta.weger@tum.de}
 \keywords{Information Set Decoding, Interleaved Codes, Code-Based Cryptography}
\maketitle

\begin{abstract}
Due to the recent challenges in post-quantum cryptography, several new approaches for code-based cryptography have been proposed.
For example, a variant of the McEliece cryptosystem based on interleaved codes was proposed. In order to deem such new settings secure, we first need to understand and analyze
the complexity of the underlying problem, in this case the problem of decoding a random interleaved code. A simple approach to decode such codes, would be to randomly choose a vector in the row span of the received matrix and run a classical information set decoding algorithm on this erroneous codeword.
In this paper, we propose a new generic decoder for interleaved codes, which is an adaption of the classical idea of information set decoding by Prange and perfectly fits the interleaved setting.  We then analyze the cost of the new algorithm and a comparison to the simple approach described above shows the superiority of Interleaved Prange.    
\end{abstract}

\section{Introduction}\label{sec:intro}
Code-based cryptography is one of the most promising and prominent candidates for post-quantum cryptography, which is reflected in the NIST standardization process \cite{NIST}. Although the third round of submissions has already been completed, and the classical McEliece system \cite{classicMC} has been chosen as a finalist, there are still many open challenges in the area. For example the lack of efficient and secure signature schemes \cite{round3}, but also the compelling task of reducing the key sizes of the original McEliece system is still open.
For this reason, researchers have proposed several alternatives to the classical scheme of McEliece, not only by changing the underlying code family, but also by considering different settings, for example by employing the rank metric \cite{rollo,rqc}, the Lee metric \cite{thomas, leeZ4, leenp} or by using interleaved codes. 
The latter approach has been proposed in \cite{IG,holzbaur2019decoding,IR}.
The simple reasoning behind this proposal is that an interleaved code has a larger error-correction capability than a non-interleaved code.

A codeword of an $ \ell$-interleaved code is an $ \ell\times n$ matrix over $\Fq$, where each row is a codeword of a constituent linear code of blocklength $n$ over $\Fq$. 
In this work, we consider the decoding problem for \emph{homogeneous} interleaved codes, where the same constituent code is used for all the rows.

An interleaved code $\Cl$ is especially well-suited for channels that are prone to burst errors, where $t$ burst errors can be modeled as the addition of an $ \ell\times n$ matrix~$\E$ with $t$ non-zero columns to a codeword in $\Cl$. We say that $\E$ has column weight $t$.

A generic decoder for any linear interleaved code was proposed in \cite{haslach1999decoding,metzner1990general}. When the interleaving order $ \ell$ is at least the number of column errors $t$, this decoder guarantees to correct (efficiently) any full-rank error of weight up to $d-2$, where $d$ is the minimum distance of the constituent code. This decoder was generalized in \cite{haslach2000efficient,roth2014coding} for the case $ \ell<t$ and guarantees to decode any error $\E$ of weight~$t$ if $2t-\rank(\E)\leq d-2$. 
However, there is no known efficient decoder for interleaved codes with an arbitrary constituent code when $ \ell \ll t$. 
In fact, it can be shown that the corresponding decisional problem, called Interleaved Decoding (ID) problem,  is at least as hard as the decisional Syndrome Decoding (SD) problem.

This  fact implies that interleaved codes  are a well-suited alternative for code-based cryptography.
It is therefore of interest to understand and analyze the complexity of decoding a generic interleaved code not only from a coding-theoretic perspective, but also in order to assess
the security of code-based cryptosystems based on interleaved codes.

In this paper, we consider algorithms for the ID problem when $\ell \ll t < d$ for arbitrary linear constituent codes. 
We can categorize the generic decoding algorithms for interleaved codes into three types:
\begin{enumerate} 
\item Algorithms that reduce the problem to the classical SD problem.
\item Algorithms that reduce the problem to a low-weight codeword finding (CF) problem.
\item Algorithms that do not reduce the problem to either CF or SD. We present one such novel algorithm inspired by Prange's information set decoder \cite{prange}.
\end{enumerate}

We remark that we will be content with finding just a subset of the $t$ error positions since then the problem reduces to a much easier problem as  the complexity is exponential in $t$.
For the third family of algorithms, we propose Interleaved Prange. The classical Prange algorithm \cite{prange} can be described as picking $k$ columns of the generator matrix $\G$, where the algorithm is successful if the corresponding positions are error-free, i.e., their complement of $n-k$ positions contains the support of the error. Alternatively one can pick $k+1$ columns from $\left[\begin{smallmatrix} \G \\ \r \end{smallmatrix}\right]$ where $\r$ is the received word and check whether the $(k+1) \times (k+1)$ submatrix formed by these columns is rank-deficient.
This can be generalized to interleaved codes, which is the main idea of our algorithm Interleaved Prange:  we pick $k+\ell$ columns of $\left[\begin{smallmatrix} \G \\ \R \end{smallmatrix}\right]$ where $\R$ is an $\ell \times n$ matrix containing the $\ell$ received words as rows, and check if the rank of the $(k+\ell) \times (k+\ell)$ submatrix formed by these columns is less than $k+\ell$.
The main contribution of this paper is the proposal and the analysis of the new decoding algorithm Interleaved Prange. 

This paper is structured as follows. In Section \ref{sec:prelim} we introduce the notation and results for interleaved codes which are essential for the remainder of the paper. We present the three types of interleaved decoding algorithms in Section \ref{sec:decode} together with their corresponding complexity analysis. A comparison of their asymptotic cost, given in Section \ref{sec:comp}, then shows that the newly proposed algorithm, Interleaved Prange, outperforms the straight-forward decoders.  
Finally, we conclude this paper in Section \ref{sec:concl}.

\section{Preliminaries}\label{sec:prelim}

Let us first introduce the notation that is used throughout this paper. 
For a prime power $q$, let us denote by $\mathbb{F}_q$ be a finite field with $q$ elements. 
We denote matrices and vectors by bold capital, respectively lower case letters. For $k \leq n$ positive integers and a matrix $\G \in \mathbb{F}_q^{k \times n}$ we denote by $\langle \G \rangle$ its rowspan, by $\G^\top$ the transposed matrix and by $\text{rk}(\G)$ its rank. For a vector $\x \in \mathbb{F}_q^n$, we  will denote by $\text{wt}(\x)$ the Hamming weight of $\x$, that is the size of its support $\text{supp}(\x)$. For a matrix $\X \in \mathbb{F}_q^{k \times n}$ we will denote by $\text{wt}(\X)$ the number of non-zero columns of $\X.$  For a set $\mathcal{S}$ we will denote by $|\mathcal{S} |$ its cardinality. The set of all integers between 1 and $n$ is denoted by $[1,n].$ Finally, for a set $\mathcal{I} \subseteq [1,n]$  of size $r$ and a matrix $\G \in \mathbb{F}_q^{k \times n}$, we denote by $\G_{\mathcal{I}} \in \mathbb{F}_q^{k \times r}$ the matrix consisting of all columns of $\G$ indexed by $\mathcal{I}.$ For a vector $\x \in \mathbb{F}_q^n$, we denote by  $\supp(\x)$ its support, that is the indices of the non-zero entries of $\x$. Similarly for  a matrix $\X \in \mathbb{F}_q^{k \times n}$ we denote by $\supp(\X)$ the indices of the non-zero columns.\\
 
 A linear subspace $\mathcal{C} \subseteq \mathbb{F}_q^n$ of dimension $k$ is called a linear code of length $n$ and dimension $k$. We call this an $[n,k]_q$ code of rate $R=\frac{k}{n}.$
 For a linear code $\mathcal{C} \subseteq \mathbb{F}_q^n$ we can also define its minimum distance to be 
 $$d(\mathcal{C}) = \min\{ \text{wt}(\c) \mid c \in \mathcal{C}, \c \neq 0 \}.$$

An $[n,k]_q$ linear code can be represented either through a generator matrix $\G \in \mathbb{F}_q^{k \times n}$, which has the code as image, or through a parity-check matrix $\H \in \mathbb{F}_q^{(n-k) \times n}$, which has the code as right kernel. For any $\x \in \mathbb{F}_q^n$, we call $\s = \x\H^\top \in \mathbb{F}_q^{n-k}$ the syndrome of $\x.$

It is well known that random codes of large blocklength over $\mathbb{F}_q$ achieve with high probability the minimum distance given by the Gilbert-Varshamov bound, that is
$$\delta=\frac{d(n)}{n} = H_q^{-1}(1-R),$$ where we denote by $H_q$ the $q$-ary entropy function.

\begin{definition}
Let $\mathcal{C} \subseteq \mathbb{F}_q^n$ be a linear code of dimension $k$ with generator matrix $\G \in \mathbb{F}_q^{k \times n}.$
 The homogeneous interleaved code of interleaving order $\ell$ of $\mathcal{C}$ is defined as 
 $$\mathcal{C}_\ell = \{ \C \in \mathbb{F}_q^{\ell \times n} \mid \C=\M\G, \M \in \mathbb{F}_q^{\ell \times k} \}. $$
\end{definition}
Thus, the codewords of an interleaved code are $\ell \times n$ matrices. Let $\H$ be a parity-check matrix of $\mathcal{C}$ and consider the interleaved code $\mathcal{C}_\ell$. The syndrome of $\X \in \mathbb{F}_q^{\ell \times n}$ is then given by $$\S = \X\H \in \mathbb{F}_q^{\ell \times (n-k)}.$$
Decoding an interleaved code with an arbitrary constituent code can be seen as the following problem.
 \begin{problem}[Interleaved Syndrome Decoding (ISD) Problem]\label{prob:ISD} Let $\ell\geq2$ be a positive integer.
   Given $\H\in\Fq^{(n-k)\times n}$, $\S\in\Fq^{ \ell\times (n-k)}$, and $t\in\bbN$, decide if there exists a matrix $\E\in\Fq^{ \ell\times n}$ of weight at most $t$, such that $\H\E^\top=\S^\top$.
 \end{problem}
 This problem is  equivalent to the Interleaved Decoding (ID) problem.
\begin{problem}[Interleaved Decoding (ID) Problem]\label{prob:ID}
  Given $\G\in\Fq^{k\times n}$, $\R\in\Fq^{ \ell\times n}$, and $t\in\bbN$, decide if there exists a matrix $\E\in\Fq^{ \ell\times n}$ of column weight at most $t$, such that each row of $\R- \E$ is in $\langle \G \rangle$.
\end{problem}
This problem can be shown to be NP-hard by a reduction from the Hamming-metric SD problem, which has been proven to be NP-complete in \cite{np,barg}.

\begin{problem}[Hamming Syndrome Decoding (SD) Problem]\label{prob:SD}
  Given $\H\in\Fq^{(n-k)\times n}$, $\s\in\Fq^{n-k}$, and $t\in\bbN$, decide if there exists a  $\e\in\Fq^{n}$ of  weight at most $t$, such that $\s=\e\H^\top$.
\end{problem}

\begin{theorem}
The Interleaved Syndrome Decoding Problem (Problem \ref{prob:ISD}) is NP-complete.
\end{theorem}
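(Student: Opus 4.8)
The plan is to prove both membership in NP and NP-hardness. Membership is immediate: I would use a candidate error matrix $\E \in \Fq^{\ell \times n}$ as the witness, so that checking $\H\E^\top = \S^\top$ reduces to a single matrix multiplication and verifying $\wtH(\E) \leq t$ amounts to counting the non-zero columns of $\E$, both polynomial in the input size. Hence the problem lies in NP.

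For NP-hardness I would reduce from the Hamming Syndrome Decoding problem (Problem \ref{prob:SD}), which is NP-complete by \cite{np,barg}. Given an SD instance $(\H, \s, t)$ with $\H \in \Fq^{(n-k) \times n}$ and $\s \in \Fq^{n-k}$, I fix any interleaving order $\ell \geq 2$ and build the syndrome matrix $\S \in \Fq^{\ell \times (n-k)}$ whose first row is $\s$ and whose remaining $\ell - 1$ rows are zero. This transformation is clearly computable in polynomial time, and it remains to show that it preserves yes-instances.

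Writing $\e_i$ for the $i$-th row of a candidate $\E$, the constraint $\H\E^\top = \S^\top$ decouples row by row into $\H\e_1^\top = \s^\top$ and $\H\e_i^\top = \0$ for $i \geq 2$. For the forward direction, if $\e$ solves the SD instance with $\wtH(\e) \leq t$, then setting $\e_1 = \e$ and $\e_i = \0$ for $i \geq 2$ yields an $\E$ whose non-zero columns are exactly $\supp(\e)$, hence $\wtH(\E) = \wtH(\e) \leq t$, a yes-instance of ISD.

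The converse is the step that needs the most care, since it hinges on comparing the column weight of a matrix with the Hamming weight of a single row. Suppose $\E$ solves the ISD instance with $\wtH(\E) \leq t$; its first row satisfies $\H\e_1^\top = \s^\top$, and because every index in $\supp(\e_1)$ forces the corresponding column of $\E$ to be non-zero, we have $\supp(\e_1) \subseteq \supp(\E)$ and thus $\wtH(\e_1) \leq \wtH(\E) \leq t$. Hence $\e_1$ solves the original SD instance. The main obstacle is purely this weight comparison: one must observe that padding with zero rows never increases the column weight, and that the parity constraints on rows $2, \dots, \ell$ are satisfied trivially by the all-zero rows, so that the minimal column weight achievable in the ISD instance coincides exactly with the minimal Hamming weight achievable in the SD instance. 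This makes the reduction both sound and complete.
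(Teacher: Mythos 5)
Your proof is correct and follows essentially the same route as the paper: a polynomial-time reduction from the Hamming SD problem (Problem \ref{prob:SD}), with NP membership via the obvious witness, and with both directions resting on the key observation that the column weight of $\E$ upper-bounds the Hamming weight of each of its rows. The only cosmetic difference is that you build $\S$ by padding $\s$ with $\ell-1$ zero rows, whereas the paper repeats $\s$ in all $\ell$ rows; both embeddings are sound for exactly the same reason.
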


\begin{proof}
We show the NP-hardness of Problem \ref{prob:ISD} by a reduction from the classical Hamming SD. For this, take a random instance $\H \in \mathbb{F}_q^{(n-k) \times n}, \s \in \mathbb{F}_q^{n-k}$ and $t\in \mathbb{N}$ of the Hamming SD. Now define $\S= \begin{pmatrix} \s \\ \vdots \\ \s \end{pmatrix} \in \mathbb{F}_q^{\ell \times (n-k)}.$ Assume we have an oracle for Problem \ref{prob:ISD}.
\begin{itemize}
\item If the answer is `yes' on the input $\H, \S,t$, then this is also the correct answer to the Hamming SD. In fact, if there exists $\E \in \mathbb{F}_q^{\ell \times n}$, such that $\H\E^\top=\S^\top$ and at most $t$ columns of $\E$ are  non-zero, then any column, e.g., the first column $\e$, of $\E$ is a solution to the Hamming SD, as $\H\e=\s$ and $\wtH(\e)\leq t.$
\item If the oracle returns `no' on the input $\H, \S,t$, then this is also the correct answer to the Hamming SD. In fact, if there was a solution $\e$ to the Hamming SD then $\E= \begin{pmatrix} \e \\ \vdots \\ \e \end{pmatrix}$ would have been a solution to the interleaved SD.
\end{itemize}
Finally, we remark that for any candidate $\E$ we can  check in polynomial time, whether $\E$ is a solution to the interleaved SD. Thus, the problem is also in NP.  
\end{proof}

\section{Decoding Algorithms}\label{sec:decode}
In this section we present three types of generic decoding algorithms for interleaved codes. 
That is, given  $\G\in\Fq^{k\times n}$, $\R\in\Fq^{ \ell\times n}$, and $t\in\bbN$, these algorithms find a matrix $\E\in\Fq^{ \ell\times n}$ of column weight at most $t$, such that each row of $\R- \E$ is in $\langle \G \rangle$.

In the following, we  assume that $\G \in \mathbb{F}_q^{k \times n}$ and
a set of error positions $\cT \subseteq [1, n]$ of size $t$ is chosen uniformly at random.
Then ones takes a $\ell \times n$ zero matrix $\E$ and sets each column at these $t$ error positions equal to a random vector in $\Fq^\ell$.
Thus $\E_\cT$ is a random matrix in $\Fq^{\ell \times t}$, and $\E$ is a random matrix in $\mathbb{F}_q^{ \ell \times n}$ of column weight at most $t$.
Finally, we choose $\M \in \mathbb{F}_q^{ \ell \times k}$ uniformly at random and compute the received matrix
$\R = \C +\E$ where $\C = \M\G$. Thus, we  assume that at least one solution to the ID problem exists.
For interleaved cryptosystems, $t$ is typically close to the minimum distance of $\G$ which we denote by $d$.

\subsection{SD-based Algorithms}\label{sec:SD}
The most straightforward way to solve the ID problem is to simply pick a random non-zero vector $\r$ in the rowspan of $\R$ and solve the resulting SD problem with the parity-check matrix $\H \in \Fq^{(n-k) \times n}$ of the constituent code and the syndrome $\s=\r\H^\top \in \mathbb{F}_q^{n-k}$. Since information set decoding (ISD) attacks are the best known algorithms
to solve the SD problem, we call this \emph{Random \textlangle ISD\textrangle} (where \textlangle ISD\textrangle\ can be any ISD algorithm such as Prange, Stern \cite{stern}, etc.).
 
We assume that it is enough to recover only a part of the non-zero columns of $\E$ since this knowledge will reduce the problem already to a much easier problem.
 \begin{algorithm}
   \caption{Random ISD}\label{algo:randomisd}
   \SetAlgoLined
   \DontPrintSemicolon
 
   \KwIn{A generator matrix $\G \in\Fq^{k \times n}$of $\mathcal{C}$ and a received matrix $\R = \C + \E \in \Fq^{\ell \times n}$ where $\text{wt}(\E)=t$} 
   \KwOut{A nonempty subset $\cU \subseteq \supp(\E)$}
  Compute a parity-check matrix $\H \in \mathbb{F}_q^{(n-k) \times n}$ of $\mathcal{C}$ \;
   Pick a non-zero $\r \in \langle \R\rangle$ at random\;
   Compute $\s = \r \H^\top$\;
   Use an ISD algorithm that can find errors $\e$ of any weight belonging to some fixed subset of $[1, t]$ and run it with inputs $\H,\s$\;
   When the ISD algorithm outputs an error $\e$, return $\supp(\e)$
 \end{algorithm}
If the success probability of the employed ISD algorithm of finding an error of weight $v$ is denoted by $P(v)$, then the success probability of the Random \textlangle ISD\textrangle\ approach is given by 
$$
\sum_{v=0}^{t}
    \dfrac{\binom{t}{v} (q-1)^{v}}{q^t}
    \cdot
    P(v),
$$
Note that $P(v)$ is simply zero for all those error weights $v$ which the chosen ISD algorithm is not designed to solve for. Here $\binom{t}{v}\frac{(q-1)^v}{q^t}$ denotes the probability that the chosen $\r$ has an error $\e$ of weight $v$. In fact, by choosing a random codeword $\r \in \langle \R\rangle$, this results in an error vector $\e$ which is a random linear combination of the rows of $\E \in \mathbb{F}_q^{\ell \times n}$ and thus when $\e$ is restricted to the $t$ error positions it looks like a vector drawn uniformly at random from $\Fq^{t}$.
Note that this approach comes with a failure probability as the errors generally have weight greater than the unique decoding radius of $\G$. However, this probability is negligible as the error weights are less than the minimum distance of $\G$.

For the complexity analysis, let us consider first that we employ the ISD algorithm of Prange \cite{prange}.
This algorithm has a success probability of $$P(v) = \binom{n-k}{v}\binom{n}{v}^{-1}.$$
Hence the success probability of Random Prange is given by
$$\sum_{v=0}^t \frac{\binom{t}{v}(q-1)^v}{q^t} \binom{n-k}{v}\binom{n}{v}^{-1}.$$
To get an upper bound on the asymptotic complexity of Random Prange, we can give a lower bound on the success probability, e.g., by considering just the term in the summation where $v= t\frac{q-1}{q}$ (a reasonable choice since this is the most likely error weight in the chosen $\r$, i.e., this $v$ maximizes $\binom{t}{v}\frac{(q-1)^v}{q^t}$).

In order to give an asymptotic complexity, we first consider the parameters $k,t$ as functions in $n$ and define 
\begin{align*}
    R &= \lim_{n \to \infty} \frac{k(n)}{n},\\
    T &= \lim_{n \to \infty} \frac{t(n)}{n} = H_q^{-1}(1-R).
\end{align*}
To ease the notation, we also introduce the  asymptotics of the binomial coefficient, denoted by 
 \begin{align*} H(F, G) & := \lim_{n\to \infty} \frac{1}{n} \log_{q} \left(  \binom{f(n)}{g(n)} \right) \\ 
 & = F \log_{q}(F) - G \log_{q}(G) - (F-G) \log_{q} (F-G), \end{align*} 
 where $f(n), g(n)$ are integer-valued functions such that $\lim\limits_{n\to \infty} \frac{f(n)}{n} = F$ and $\lim\limits_{n \to \infty} \frac{g(n)}{n} = G$.

Thus, we get the following upper bound.
\begin{proposition}
The asymptotic complexity of Random Prange on an $\ell$\nobreakdash-interleaved random code over $\mathbb{F}_q$ with length $n$ and dimension $k$ is given by at most $q^{ne(R,q)}$, where 
\begin{align*} e(R,q) = H(1,T(q-1)/q)-H(1-R, T(q-1)/q). \end{align*}
\end{proposition}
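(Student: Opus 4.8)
The plan is to use the standard fact that an iterative information-set decoder runs in time essentially $\poly(n)$ divided by its per-round success probability: one round consists of drawing a fresh $\r\in\langle\R\rangle$, forming $\s=\r\H^\top$, and performing one Gaussian elimination, each of which costs $\poly(n)$, so the expected number of rounds is the reciprocal of the per-round success probability $P_{\mathrm{succ}}$ displayed above. On the exponential scale the $\poly(n)$ factors are negligible, so the complexity exponent is $-\lim_{n\to\infty}\tfrac1n\log_q P_{\mathrm{succ}}$. To obtain the claimed \emph{upper} bound on the complexity it therefore suffices to produce a matching \emph{lower} bound on $P_{\mathrm{succ}}$.

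First I would lower-bound the sum by keeping only its largest summand, indexed by a maximizer $v^\ast$ of $\binom{t}{v}(q-1)^v/q^t$ over $v\in[0,t]$, and discarding the remaining nonnegative terms. Because $\e$ restricted to the $t$ error positions is uniform on $\Fq^t$, its weight is binomially distributed with parameters $t$ and $(q-1)/q$, so $v^\ast$ lies within $1$ of $t(q-1)/q$ and hence $v^\ast/n\to T(q-1)/q$. This leaves
$$
P_{\mathrm{succ}}\ \geq\ \frac{\binom{t}{v^\ast}(q-1)^{v^\ast}}{q^t}\cdot\binom{n-k}{v^\ast}\binom{n}{v^\ast}^{-1}.
$$
The key observation is that the first factor is only polynomially small, hence exponentially irrelevant: since $\sum_{v=0}^{t}\binom{t}{v}(q-1)^v=q^t$ by the binomial theorem, the quantities $\binom{t}{v}(q-1)^v/q^t$ form a probability distribution on $t+1$ values, and $v^\ast$ maximizes them, so this factor is at least $1/(t+1)$.

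It then remains to evaluate the exponential rate of $\binom{n-k}{v^\ast}\binom{n}{v^\ast}^{-1}$. Setting $V:=\lim_{n\to\infty}v^\ast/n=T(q-1)/q$ and applying the asymptotic estimate $H(\cdot,\cdot)$ for binomial coefficients termwise gives $\tfrac1n\log_q\binom{n-k}{v^\ast}\to H(1-R,V)$ and $\tfrac1n\log_q\binom{n}{v^\ast}\to H(1,V)$, so $P_{\mathrm{succ}}$ is at least of order $q^{n(H(1-R,V)-H(1,V))}$ and the complexity is at most of order $q^{n(H(1,V)-H(1-R,V))}=q^{ne(R,q)}$ with $V=T(q-1)/q$, as claimed. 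The only delicate points are bookkeeping ones: checking that the integer maximizer $v^\ast$ together with the limits $v^\ast/n\to V$ and $t/n\to T$ pass correctly through the continuous function $H$. Equivalently, one may verify directly the identity $H(T,V)+V\log_q(q-1)-T=0$, which is precisely the statement that the discarded first factor contributes nothing to the exponent.
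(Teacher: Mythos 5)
Your proposal is correct and follows essentially the same route as the paper: lower-bound the success probability by the single summand at $v \approx t(q-1)/q$, observe that the binomial weight $\binom{t}{v}(q-1)^v/q^t$ is only polynomially small (equivalently, $H(T,V)+V\log_q(q-1)-T=0$ for $V=T(q-1)/q$), and read off the exponent of $\binom{n-k}{v}\binom{n}{v}^{-1}$ as $H(1-R,V)-H(1,V)$. Your explicit $1/(t+1)$ bound and the identity check merely make rigorous what the paper leaves implicit.
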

If we employ Stern's ISD algorithm \cite{stern}, we get a slight improvement.
However, note that Stern's algorithm (at least in its conventional formulation) only solves the SD problem for a fixed error weight $w$.
If instead in each iteration we run Stern $t$ times for all error weights $w \in [1, t]$, this gives us a straightforward extension of the algorithm that works for all errors with weights in $[1, t]$.
While this of course increases the cost of one iteration, it turns out that asymptotically the cost remains the same and since this formulation can only improve the probability of success of Random Stern, we will consider this version.

The cost of Random Stern's algorithm is in $\mathcal{O}(I \cdot C)$, where $I$ is the expected number of iterations and $C$ the cost of one iteration. This is given by
\begin{align*}
    I &= \left( \sum_{v=0}^t \frac{\binom{t}{v}(q-1)^v}{q^t} \binom{(k+\ell'_v)/2}{w'_v/2}^2 \binom{n-k-\ell'_v}{v-w'_v} \binom{n}{v}^{-1}\right)^{-1} \\
    C &= \sum_{v=1}^{t} C_v \text{ where } C_v = \binom{(k+\ell'_v)/2}{w'_v/2}q^{w'_v/2} + \binom{(k+\ell'_v)/2}{w'_v/2}^2q^{w'_v-\ell'},
\end{align*}
where $0\leq w'_v \leq \min\{k+\ell', v\}, 0 \leq \ell'_v \leq n-k$ are the internal parameters of Stern's algorithm that can be optimized individually for each of the $t$ runs to give the lowest cost. 

To get an upper bound on the asymptotic complexity of Random Stern, we again just consider the $v_0=t\frac{q-1}{q}$ term in the summation in $I$'s formula.
For this let us consider additionally the parameters $w_{v_0}'$ and $\ell_{v_0}'$ as functions in $n$ and define
\begin{align*}
    W' &= \lim_{n \to \infty} \frac{w'_{v_0}(n)}{n},\\
    L' &= \lim_{n \to \infty} \frac{\ell'_{v_0}(n)}{n}.
\end{align*}

\begin{proposition}
The asymptotic complexity of Random Stern on an $\ell$\nobreakdash-interleaved random code over $\mathbb{F}_q$ with length $n$ and dimension $k$ is given by at most $q^{ne(R,q)}$, where 
\begin{align*} e(R,q) = &   H(1,T(q-1)/q)-2H((R+L')/2,W'/2) \\ & - H(1-R-L',T(q-1)/q-W')  \\ & + \max\{H((R+L')/2,W'/2) +W'/2,   \\ &  2H((R+L')/2,W'/2) +W'-L' \}.
\end{align*}
\end{proposition}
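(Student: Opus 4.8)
The plan is to bound the total cost $I\cdot C$ by analysing the two factors separately on the exponential scale and then adding their exponents. For the iteration count $I$, I would first pass to a lower bound on the success probability by discarding every term of the sum defining $I^{-1}$ except the one indexed by $v_0 = t(q-1)/q$; since all summands are non-negative this yields $I \le p_{v_0}^{-1}$, where $p_{v_0}$ denotes the $v_0$-th summand. Concretely this amounts to analysing the (suboptimal, hence still upper-bounding) variant of Random Stern that in every iteration searches only for an error of weight exactly $v_0$ in the randomly drawn word $\r$; its per-iteration cost is then $C_{v_0}$ rather than the full sum $\sum_v C_v$, which is what ultimately produces the clean formula in terms of the $v_0$-parameters $W',L'$.

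The key computational step, and the one I would carry out first, is to show that the prefactor $\binom{t}{v_0}(q-1)^{v_0}q^{-t}$ contributes exponent $0$. This is exactly the statement that $v_0 = t(q-1)/q$ maximises $\binom{t}{v}(q-1)^v$ and that $\sum_{v}\binom{t}{v}(q-1)^v = q^t$, so the largest term agrees with the whole sum up to a polynomial factor. A short direct check using the $H$-asymptotics confirms $H(T,T(q-1)/q) + \tfrac{T(q-1)}{q}\log_q(q-1) - T = 0$ (the $T\log_q T$ contributions cancel because the two mass fractions sum to $1$). With this in hand, applying the definition of $H(F,G)$ to the remaining factors $\binom{(k+\ell'_{v_0})/2}{w'_{v_0}/2}^2$, $\binom{n-k-\ell'_{v_0}}{v_0-w'_{v_0}}$ and $\binom{n}{v_0}^{-1}$ and negating (because $I=p_{v_0}^{-1}$) yields precisely the first three terms $H(1,T(q-1)/q) - 2H((R+L')/2,W'/2) - H(1-R-L',T(q-1)/q-W')$.

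For the per-iteration cost I would note that $C_{v_0}$ is the sum of the two terms $\binom{(k+\ell'_{v_0})/2}{w'_{v_0}/2}q^{w'_{v_0}/2}$ and $\binom{(k+\ell'_{v_0})/2}{w'_{v_0}/2}^2 q^{w'_{v_0}-\ell'_{v_0}}$, whose exponents are $H((R+L')/2,W'/2)+W'/2$ and $2H((R+L')/2,W'/2)+W'-L'$ respectively; since the exponent of a sum of two exponential quantities is the maximum of their exponents, this reproduces the final $\max\{\cdot,\cdot\}$ term. The outer sum $\sum_{v=1}^{t}C_v$ of the full algorithm only multiplies the cost by a factor $O(n)$, which is negligible on the exponential scale, so nothing is lost by working with $C_{v_0}$. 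Adding the exponent of $I$ to the exponent of $C$ then gives $e(R,q)$, completing the bound $I\cdot C \le q^{ne(R,q)}$.

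I expect the main obstacle to be conceptual rather than computational: making precise that restricting to the single weight $v_0$ really does yield an \emph{upper} bound on the complexity of the fully optimised algorithm (the restriction raises $I$ but lowers $C$, so one must argue that targeting weight $v_0$ is itself an admissible strategy and hence cannot beat the optimum), together with keeping track of the feasibility constraints $0 \le W'/2 \le (R+L')/2$ and $0 \le L' \le 1-R$ so that every $H(\cdot,\cdot)$ appearing is well defined. The remaining six binomial-exponent evaluations are routine bookkeeping via the definition of $H(F,G)$.
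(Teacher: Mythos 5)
Your proposal is correct and follows essentially the same route as the paper, which likewise obtains the bound by keeping only the $v_0 = t(q-1)/q$ term in the iteration-count sum, introducing $W'$ and $L'$ as the limits of the $v_0$-parameters, and reading off the exponents of the binomial factors via $H(\cdot,\cdot)$. The details you supply beyond the paper's terse justification—the cancellation $H(T,T(q-1)/q)+\tfrac{T(q-1)}{q}\log_q(q-1)-T=0$, the $\max$ arising from the two terms of $C_{v_0}$, and the observation that the sum over the other weights costs only a polynomial factor—are exactly the computations implicitly assumed there, and they check out.
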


\subsection{CF-based Algorithms}\label{sec:CF}
A different approach is the following. 
Having received the matrix $\R$, note that the code generated by $\left[\begin{smallmatrix} \G \\ \R \end{smallmatrix}\right] \eqqcolon \GsupCode$ is the same as the code generated by  $\left[ \begin{smallmatrix} \G \\ \E \end{smallmatrix}\right]$.
Thus the problem reduces to finding a low-weight codeword in the code $\langle \GsupCode \rangle$ of dimension $k+ \ell$.
Let us denote by $\HsupCode \in \mathbb{F}_q^{(n-(k+ \ell)) \times n}$ a parity-check matrix of the code $\langle {\GsupCode} \rangle.$

 \begin{algorithm}
   \caption{CF-based Algorithm}\label{algo:cfbased}
   \SetAlgoLined
   \DontPrintSemicolon
   \KwIn{A generator matrix $\G \in\Fq^{k \times n}$ and a received matrix $\R = \C + \E \in \Fq^{\ell \times n}$ where $\langle \E \rangle$ has minimum distance $w$} 
   \KwOut{A nonempty subset $\cU \subseteq \supp(\E)$}
  Compute the parity-check matrix $\H' \in \Fq^{n-k-\ell}$ of the code $ \langle \G' \rangle$.\;
   Use a CF algorithm with inputs $\H'$ and $w$ to find a set $\cL$ of codewords of weight $w$ in $\langle \G' \rangle$ \label{step:CSD}\;
   Return $\cU=\cup_{\e\in\cL}\supp(\e)$
 \end{algorithm}

Algorithm \ref{algo:cfbased} gives a framework of finding the support of $\E$ from $\HsupCode$ by using a low-weight codeword finding algorithm (e.g., \cite[Algorithm 1]{otmani2011efficient}) as a subroutine.

The complexity of this approach is the same as the complexity of the CF algorithm used for finding low-weight codewords in the code $\langle \GsupCode \rangle$. For example one might employ the well-known ISD algorithm by Stern. 
However, this approach comes with a possibly large failure probability, as $\langle \GsupCode \rangle$ might contain many low-weight codewords whose support is not a subset of the original $t$ error positions.

One reason to suspect this is as follows. Recall that a random $[n,k]$ linear code over $\mathbb{F}_q$ with minimum distance $d$ has on average
$q^{k-n+w} \binom{n}{w}$ many codewords of weight $d\leq w \leq n.$
If we treat $\G'$ as a random matrix then it has approximately $q^{k+\ell-n+w} \binom{n}{w}$ many codewords of weight $w$, while only $q^{\ell-t+w}\binom{t}{w}$ many of those are from $\langle \E_\cT \rangle$.
This approach gives a failure probability of 
$ 1-q^{-k-\ell-t}\binom{t}{w}\binom{n}{w}^{-1}$.

However, this is imprecise as $\G'$ is not entirely random, but such that the last $\ell$ rows have only $t$ non-zero columns.
Unfortunately, an accurate analysis of the failure probability for these algorithms is complicated, but the above computation does gives evidence that it is quite large.
As the other algorithms have a failure probability that is either negligible or at least allow for a more tractable analysis, it is hard to compare them to CF-based algorithms.

\subsection{Novel approach: Interleaved Prange}\label{sec:IP}
We propose a new algorithm (Algorithm \ref{algo:IntPrange}) inspired by the classical attack of Prange. 
Note that Prange's algorithm can be described as choosing $k+1$ columns  in $\begin{bmatrix} \G \\ \r \end{bmatrix}$ where $\r$ is the received word and checking whether the $(k+1) \times (k+1)$ submatrix formed at these positions is rank deficient.
This formulation can neatly be generalized to interleaved codes, where we pick $k+\ell$ columns in $\begin{bmatrix} \G \\ \R \end{bmatrix}$ and check if the rank of the $(k+\ell) \times (k+\ell)$ submatrix formed at these positions is less than $k+\ell$.

 In more details, we choose  a set $\cJ \subset [1,n]$ of size $k+\ell$, which contains an information set $\cI$ for $\G$ (in other words, $\G_\cI$ and hence $\G_\cJ$ has full rank). Let us denote again by $\G' \coloneqq \begin{bmatrix} \G \\ \R \end{bmatrix}$ and check if the square submatrix $\G'_{\cJ}$ (the blue region in Fig.~\ref{fig:intPrange}) is rank-deficient, that is $$\text{rk}\left( \left(\G' \right)_\mathcal{J}\right) <k+\ell.$$ This can be split into two cases:
 \begin{enumerate} 
 \item $\E_\cJ$ has linearly dependent rows (which implies $\G'_\cJ$ is rank deficient).
 \item $\E_\cJ$ has linearly independent rows but $\G'_\cJ$ is still rank deficient.
 \end{enumerate}
 In the first case, we succeed as at least one non-zero word in $\langle{\R} \rangle$ is error-free at these $k+\ell$ positions and so by performing the re-encoding step of Prange on such a word $\r$, we can find the error in this word, giving us a subset of the $t$ error positions.
 A naive way to find such an $\r$ would be to do the re-encoding on all $q^\ell - 1$ non-zero words in $\langle{\R} \rangle$.
 However, this step will fail if we are in the second case.
 As it turns out, the second case is far more likely than the first, so this naive re-encoding approach will make the entire algorithm very inefficient.
 Instead we do the re-encoding for only those $\r \in \langle{\R} \rangle$ that actually belong to some linearly dependent set of rows in $\G'_\cJ$
 which can be easily found by computing its left null space, i.e., the set $\{\x \in \Fq^{k + \ell} : \x\G'_\cJ = \0\}$.
 With this modification, the algorithm becomes efficient again, though perhaps at the expense of a more involved complexity analysis.

 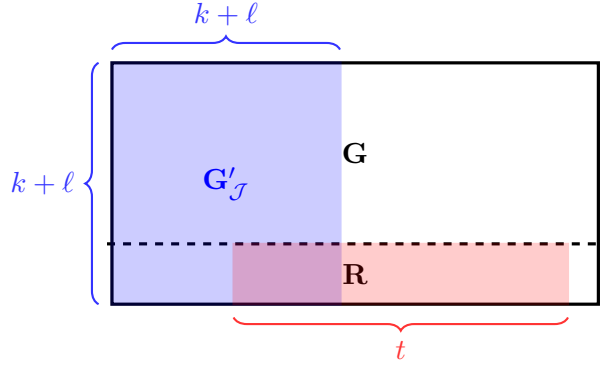
\begin{figure}[ht]
  \centering
\def\x{0.4} 

\begin{tikzpicture}
\pgfplotsset{compat = 1.3}
\node (LU) at (0,0) [draw=none] {};
\node (RB) at ($(LU)+(\x*16, \x*8)$) [draw = none] {};

\node (EL) at ($(LU)+(\x*-0.5, \x*2)$) [draw=none] {};
\node (ER) at ($(LU)+(\x*16.5, \x*2)$) [draw=none] {};

\draw[dashed, very thick] (EL) -- (ER);

\draw[very thick] (LU) rectangle (RB);
\node (E) at ($(LU)+(\x*8, \x*1)$) [draw=none] {$\R$};
\node (G) at ($(LU)+(\x*8, \x*5)$) [draw=none] {$\G$};

\node (WL) at ($(LU)+(\x*0, \x*0)$) [draw=none] {};
\node (WR) at ($(LU)+(\x*7.5, \x*8)$) [draw=none] {};
\draw[very thick, blue, fill=blue, opacity=0.2] (WL) rectangle (WR) node[pos=.5, opacity=1] {$\G_{\cJ}'$};
\draw [blue, opacity=0.8, thick,
    decorate,
    decoration = {brace,raise=5pt,
        amplitude=5pt}] ($(WL)+(\x*0,\x*8)$) -- ($(WR)+(0,0)$) node[pos=0.5,above=10pt,blue]{$k+\ell$};
\draw [blue, opacity=0.8, thick,
    decorate,
    decoration = {brace,raise=5pt,
        amplitude=5pt}] ($(WL)+(\x*0,\x*0)$) -- ($(WR)-(\x*7.50,0)$) node[pos=0.5,left=10pt,blue]{$k+\ell$};

\node (EWL) at ($(LU)+(\x*4, \x*0)$) {};
\node (EWR) at ($(LU)+(\x*15, \x*2)$) {};
\draw[thick, red, fill=red, opacity=0.2] (EWL) rectangle (EWR);
\draw [red, opacity=0.8, thick,
    decorate,
    decoration = {brace,raise=5pt,
        amplitude=5pt,mirror}] ($(EWL)+(\x*0,\x*0)$) --  ($(EWR)-(\x*0,\x*2)$) node[pos=0.5,below=10pt,red]{$t$};
\end{tikzpicture}

  \caption{Illustration of Interleaved Prange Algorithm. }
  \label{fig:intPrange}
\end{figure}

\begin{algorithm}
  \caption{Interleaved Prange}\label{algo:IntPrange}
  \SetAlgoLined
  \DontPrintSemicolon
  \KwIn{A generator matrix $\G \in\Fq^{k \times n}$ and a received matrix $\R = \C + \E \in \Fq^{\ell \times n}$ where $\E$ has at most $t$ non-zero columns.} 
  \KwOut{A nonempty subset $\cU \subseteq \supp(\E)$}

  Choose  $\cJ\subset[1,n]$ of size $k+\ell$ such that $\rank(\G_\cJ) = k$
  \label{line:chooseI}\;
  \uIf{$\rank(\G'_{\cJ})<k+\ell$
\label{line:rankcheck}
  }
  {
    \For{each $\x \in \Fq^{k+\ell} \setminus \{\mathbf{0}\}$ in the left null space of $\G'_\cJ$
    \label{line:foreachloop}
    }{
     \lIf{$\wtH(\x\G') \leq t$}{return $\supp(\x\G')$.}
   }}\uElse{Go back to step \ref{line:chooseI}.}
\end{algorithm}

\begin{theorem}
The cost of Interleaved Prange on an $\ell$-interleaved random code over $\mathbb{F}_q$ with length $n$ and dimension $k$ is in
$$\mathcal{O}\left(P^{-1} C  \right),$$ where 
\begin{align*}
    P &= \sum_{i=0}^{\min\{t, k + \ell\}}
  \dfrac{\binom{n-t}{k+\ell-i}\binom{t}{i}}{\binom{n}{k+\ell}}
  \cdot \left(1 - \prod_{j=0}^{\ell - 1} (1 - q^{j-i})\right), \\
\end{align*} 
denotes the success probability and
$$C = ( k+\ell)^3 + \prod_{j=0}^{k-1}(1-q^{j-k})16 \sum_{p=1}^\ell q^{ -p^2+p} )(k+\ell)(n-k-\ell)$$
denotes the cost of one iteration.
\end{theorem}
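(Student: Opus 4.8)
The plan is to prove the two factors separately and then multiply them. Since Algorithm~\ref{algo:IntPrange} draws a fresh set $\cJ$ with $\rank(\G_\cJ)=k$ at the start of every iteration, the iterations are mutually independent, so the number of them until the first success is geometric with parameter $P$ and has expectation $P^{-1}$. The total expected cost is then $P^{-1}C$, where $P$ is the per-iteration success probability and $C$ is the per-iteration cost. It therefore suffices to identify $P$ and $C$ separately.

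First I would compute $P$. The key reduction is that $\langle\G'\rangle=\langle\left[\begin{smallmatrix}\G\\\E\end{smallmatrix}\right]\rangle$, because subtracting $\M$ times the rows of $\G$ from the rows of $\R=\M\G+\E$ turns $\R$ into $\E$; as restricting to the columns indexed by $\cJ$ commutes with taking the row span, $\rank(\G'_\cJ)=\rank\!\left(\left[\begin{smallmatrix}\G_\cJ\\\E_\cJ\end{smallmatrix}\right]\right)$. Picking an information set $\cI\subseteq\cJ$ and row-reducing $\G$ to $\widetilde{\G}$ with $\widetilde{\G}_\cI=\I_k$, the identity block clears $\E_\cI$ and gives $\rank(\G'_\cJ)=k+\rank(\A)$ with $\A:=\E_{\cJ\setminus\cI}-\E_\cI\widetilde{\G}_{\cJ\setminus\cI}\in\Fq^{\ell\times\ell}$; in particular the left null space of $\G'_\cJ$ has dimension $\ell-\rank(\A)$. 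I would then split on $i=|\cT\cap\cJ|$: since $\cJ$ is a uniform $(k+\ell)$-subset and $\cT=\supp(\E)$ is fixed of size $t$, the value $i$ is hypergeometric, contributing $\binom{t}{i}\binom{n-t}{k+\ell-i}\binom{n}{k+\ell}^{-1}$. Writing $\A=\E_{\cT\cap\cJ}\B'$ with $\E_{\cT\cap\cJ}\in\Fq^{\ell\times i}$ uniform and $\B'$ the fixed submatrix of $\left[\begin{smallmatrix}-\widetilde{\G}_{\cJ\setminus\cI}\\\I_\ell\end{smallmatrix}\right]$ on the rows of $\cT\cap\cJ$, the guaranteed-success event (the first of the two cases described before Algorithm~\ref{algo:IntPrange}) is exactly that the $\ell$ rows of $\E_{\cT\cap\cJ}$ are linearly dependent: then there is $\a\neq\0$ with $\a\E_\cJ=\0$, so the null space vector $\x=(-\a\M,\a)$ yields $\x\G'=\a\E$ of weight at most $t$ with $\supp(\x\G')\subseteq\cT$, passing the test. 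I would argue that the second case (independent rows but $\A$ still singular) produces only vectors $\x$ with $\x\G'$ equal to a codeword plus a nonzero error, hence of weight at least $d>t$ with overwhelming probability, so these are filtered out by $\wtH(\x\G')\le t$ and do not create false successes. Counting full-row-rank uniform $\ell\times i$ matrices then gives the conditional success probability $1-\prod_{j=0}^{\ell-1}(1-q^{j-i})$ (for $i<\ell$ the factor $j=i$ vanishes, correctly yielding $1$), and summing over $i$ reproduces the stated $P$.

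Next I would compute $C$. Computing $\rank(\G'_\cJ)$ together with the left null space is a single Gaussian elimination on a $(k+\ell)\times(k+\ell)$ matrix, costing $(k+\ell)^3$, which is the first summand. The loop over the null space then evaluates, for each nonzero $\x$, the weight $\wtH(\x\G')$; since $\x\G'$ vanishes on $\cJ$ by construction, only the $n-k-\ell$ coordinates outside $\cJ$ need to be computed, at cost $k+\ell$ each, i.e. $(k+\ell)(n-k-\ell)$ per vector. The number of vectors processed per iteration is the expected number of nonzero left-null-space vectors of $\G'_\cJ$, namely $\mathbb{E}\!\left[q^{\ell-\rank(\A)}-1\right]$, and the remaining task is to bound this expectation by $\prod_{j=0}^{k-1}(1-q^{j-k})\cdot 16\sum_{p=1}^{\ell}q^{-p^2+p}$, which multiplied by $(k+\ell)(n-k-\ell)$ gives the second summand of $C$.

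The hard part will be exactly this bound on the expected null space size. Whereas $P$ only uses the single event that the rows of $\E_{\cT\cap\cJ}$ are dependent, the loop cost is governed by the full distribution of the nullity $p=\ell-\rank(\A)$, and $\A=\E_{\cT\cap\cJ}\B'$ mixes a uniform factor with a structured one whose rank depends on how many of the $\ell$ extra positions carry errors and on $\widetilde{\G}_{\cJ\setminus\cI}$. The plan is to invoke the exact counts for the number of $\Fq$-matrices of each rank to show that $\Pr[\text{nullity}=p]$ decays like $q^{-p^2}$ up to a constant bounded by $16$ (a uniform bound on the relevant $q$-Pochhammer normalizers), so that $\mathbb{E}\!\left[q^{\ell-\rank(\A)}-1\right]=\sum_{p\ge1}(q^p-1)\Pr[\text{nullity}=p]$ collapses to $16\sum_{p=1}^{\ell}q^{-p^2+p}$, while the factor $\prod_{j=0}^{k-1}(1-q^{j-k})$ reflects the probability that the $k\times k$ block underlying the re-encoding is nonsingular so that $\cJ$ genuinely carries an information set. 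Keeping this estimate both rigorous and clean—correctly separating the genuine (Case~1) null space vectors from the spurious (Case~2) ones, justifying the random-matrix model for $\A$, and controlling the constant uniformly in $\ell$—is the main technical obstacle; the hypergeometric bookkeeping and the remaining linear algebra are routine.
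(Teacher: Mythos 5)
Your setup reproduces the paper's argument faithfully up to the crucial estimate: the geometric-iteration count $P^{-1}$, the derivation of $P$ (hypergeometric weighting of $i=|\cJ\cap\cT|$ times the probability $1-\prod_{j=0}^{\ell-1}(1-q^{j-i})$ that a uniform $\ell\times i$ matrix has dependent rows), the identification of Case-1 null-space vectors as valid outputs, and the split of $C$ into the Gaussian-elimination term $(k+\ell)^3$ plus (expected number of nonzero null-space vectors) times the re-encoding cost $(k+\ell)(n-k-\ell)$ are all exactly the paper's steps, and your justification of $P$ is if anything more detailed than the paper's.

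The genuine gap is that you defer precisely the estimate that constitutes the paper's actual proof: the bound $\mathbb{E}\left[q^{\ell-\rank(\A)}-1\right]\leq \prod_{j=0}^{k-1}(1-q^{j-k})\cdot 16\sum_{p=1}^{\ell}q^{-p^2+p}$, and the route you sketch for it does not resolve the difficulty you yourself identify. You reduce the nullity of $\G'_\cJ$ to that of $\A=\E_{\cT\cap\cJ}\B'$, where $\B'$ is a \emph{fixed} matrix assembled from rows of $-\widetilde{\G}_{\cJ\setminus\cI}$ and of $\I_\ell$; but ``exact counts of $\Fq$-matrices of each rank'' apply to uniform matrices, not to a product of a uniform matrix with a fixed one, and the rank distribution of $\A$ degrades arbitrarily when $\rank(\B')<\ell$ (e.g.\ when the error positions concentrate inside $\cI$ and the corresponding rows of $\widetilde{\G}$ are degenerate), in which case no $q^{-p^2}$ decay holds. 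The paper sidesteps this entirely by never forming $\A$: conditioned on $\E_{\cJ\cap\cT}$ having rank $\ell$, it uses the randomness of the code itself, treating $\G_\cJ$ as a uniform $k\times(k+\ell)$ matrix, and observes that the rank deficiency $p$ of $\G'_\cJ$ equals $\dim(\langle\E_{\cJ\cap\cT}\rangle\cap\langle\G_\cJ\rangle)$; counting the $k$-dimensional subspaces meeting a fixed $\ell$-dimensional subspace in dimension $p$ gives
\begin{align*}
P(p)=\prod_{j=0}^{k-1}(1-q^{j-k})\,q^{-\ell k}\,\gb{\ell}{p}_q\gb{k}{k-p}_q\,q^{(\ell-p)(k-p)},
\end{align*}
after which $\sum_{p}P(p)q^p$ is bounded via $q^{(a-b)b}\leq\gb{a}{b}_q\leq 4q^{(a-b)b}$ — this is where the constant $16$ (two factors of $4$) and the exponent $-p^2+p$ come from. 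Note also that your reading of $\prod_{j=0}^{k-1}(1-q^{j-k})$ as ``the probability that $\cJ$ genuinely carries an information set'' cannot be right: the algorithm conditions on $\rank(\G_\cJ)=k$ by construction, so no such event occurs inside an iteration; in the paper that factor is an artifact of counting ordered bases of $\langle\G_\cJ\rangle$ versus all $k\times k$ matrices. Without carrying out this (or an equivalent) counting argument, the stated formula for $C$ is not established.
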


\begin{proof}
This algorithm succeeds whenever the chosen set $\mathcal{J}$ is such that the rows of $\E_{\cJ \cap \cT}$ are linearly dependent and that $\G_\cJ$ has rank $k$. Since the latter is true with high probability, we will assume this probability is one.
Let $i$ denote $|\cJ \cap \cT|$, i.e., the number of error positions in the set $\cJ$.
Since $\E_{\cJ \cap \cT}$ has the distribution of a random matrix in $\Fq^{\ell \times i}$,
the probability that $\E_{\cJ \cap \cT}$ has linearly dependent rows is given by,
$$ \left(1 - \prod_{j=0}^{\ell - 1} (1 - q^{j-i})\right).$$
Next, we weight this term with the probability that exactly $i$ errors land in $\cJ$ and form the summation over all possible $i$, giving us 
$$P = \sum_{i=0}^{\min\{t, k + \ell\}}
  \dfrac{\binom{n-t}{k+\ell-i}\binom{t}{i}}{\binom{n}{k+\ell}}
  \cdot \left(1 - \prod_{j=0}^{\ell - 1} (1 - q^{j-i})\right).$$
  Hence, we will need $P^{-1}$ many iterations until we succeed.
  Among these non-successful iterations, we could either have that $\G'_\mathcal{J}$ was not rank deficient, which costs 
  $\mathcal{O}((k + \ell)^3)$  due to the Gaussian elimination to check $\G'_\cJ$'s rank (step \ref{line:rankcheck})
  or we have that $\G'_\mathcal{J}$ was indeed rank deficient but $\E_\mathcal{J}$ had linearly independent rows.
  In this second case
  we incur the additional cost of step \ref{line:foreachloop} since only after that we will recognize
  that $\E_\mathcal{J}$ did not have linearly dependent rows.
  Note that step $\ref{line:foreachloop}$ consists of computing the left null space of $\G'$ and then performing $q^p$ re-encoding steps where $p$ is the dimension of this space.
  The left null space can be found using Gaussian elimination and thus has the same cost as the rank-check in step \ref{line:rankcheck}
  (in fact, it is possible to find the null space with effectively no additional work from this step).

  In order to compute the cost of doing the $q^p$ re-encodings, assume $\E_\cJ$ has linearly independent rows. 
  Let $P(p)$ denote the probability that $\G'_\mathcal{J}$ has rank deficiency $p$,
  i.e., $\text{rk}(\G'_\mathcal{J})=k+\ell-p$ where $p \in [1, \ell]$ (since $\cJ$ is chosen such that $\text{rk}(\G_\cJ) = k$). Thus $p$ is the dimension of the left null space of $\G'$. 
Then the workfactor of the re-encoding is given by
$$C' = \sum_{p=1}^\ell P(p)q^p \alpha$$
where $\alpha \in \mathcal{O}((k+\ell)(n-k-\ell))$ is the cost of a single re-encoding step.

To compute $P(p)$, we make use of the following result: 
if $V$ is an $n$\nobreakdash-dimensional vector space over $\mathbb{F}_q$ and $U$ is an $m$-dimensional subspace in $V$, then the number of $k$-dimensional subspaces $W$ over $\mathbb{F}_q$ with $\text{dim}(W \cap U) =d$ is given by
$$\gb{n-m}{k-d}_q\gb{m}{d}_q q^{(m-d)(k-d)},$$
where $\gb{a}{b}_q$ denotes the Gaussian binomial coefficient.

Since the number of $\G'_\mathcal{J}$ with rank deficiency $p$ is given by the number of $\E_{\cJ \cap \cT}$ of rank~$\ell$ times the number of $\G_\mathcal{J}$ of rank $k$ such that $\text{dim}(\langle \E_{\cJ \cap \cT} \rangle \cap \langle \G_\mathcal{J}\rangle)=p$, we get
$$\prod_{j=0}^{\ell-1}(q^i-q^j)\prod_{j=0}^{k-1}(q^k-q^j)\gb{\ell}{p}_q \gb{k}{k-p}_q q^{(\ell-p)(k-p)}$$
where the first term counts the number
of rank $\ell$  matrices $\E_{\cJ \cap \cT}$, the second term is the number of ways of picking an ordered basis of a $k$-dimensional subspace
and the third term counts the number of $k$-dimensional subspaces (i.e. $\langle \G_\cJ \rangle$) 
inside a $k + \ell$ dimensional space
whose intersection with a fixed $\ell$-dimensional subspace (i.e. $\langle \E_{\cJ \cap \cT} \rangle) $ has dimension $p$.

Dividing this by the total number of possible $\G_\mathcal{J}'$, i.e.,
$$q^{(k+\ell)k} \prod_{j=0}^{\ell-1}(q^i-q^j)$$ we get the probability 
\begin{align*} P(p)&= \prod_{j=0}^{k-1} \frac{q^k-q^j}{q^{k+\ell}}\gb{\ell}{p}_q\gb{k}{k-p}_q q^{(\ell-p)(k-p)}.
\end{align*}
Hence the workfactor of one iteration is given by 
$C = \beta + C'$ where $\beta \in \mathcal{O}((k + \ell)^3)$ and

\begin{align*}
    C' &= \sum_{p=1}^\ell P(p)q^p \alpha \\
     & = \prod_{j=0}^{k-1}\left(1-q^{j-k}\right)q^{-\ell k}\sum_{p=0}^\ell \gb{\ell}{p}_q\gb{k}{k-p}_q q^{(\ell-p)(k-p)} q^p \alpha\\
     & \leq  \prod_{j=0}^{k-1}(1-q^{j-k})q^{-\ell k}\sum_{p=1}^\ell 16 q^{\ell k -p^2+p} \alpha
     \\ 
     &= \prod_{j=0}^{k-1}(1-q^{j-k})16 \sum_{p=1}^\ell q^{ -p^2+p} \alpha,
\end{align*}
where we used that $$q^{(a-b)b} \leq \gb{a}{b}_q \leq 4 q^{(a-b)b}.$$ 
 
\end{proof}
Again, we will give an upper bound on the  asymptotic cost. For this it is enough to consider a lower bound on the success probability $P$, as \begin{align*}
   \lim_{n \to \infty} \frac{1}{n}  \log_q(C) & =   \lim_{n \to \infty} \frac{1}{n}  \log_q\left((k+\ell)^3 + \prod_{j=0}^{k-1}(1-q^{j-k})16 \sum_{p=1}^\ell q^{ -p^2+p} \alpha \right) \\
    & \leq  \lim_{n \to \infty} \frac{1}{n}  \log_q((k+\ell)^3+\ell 16 (k+\ell)(n - k - \ell) )=0.
\end{align*}
Note that the success probability can be written as 
$$P=\sum_{i=0}^{\min\{t, k+\ell\}} Q_i,$$ for $$Q_i =  \dfrac{\binom{n-t}{k+\ell-i}\binom{t}{i}}{\binom{n}{k+\ell}}
  \cdot \left(1 - \prod_{j=0}^{\ell - 1} (1 - q^{j-i})\right).$$
  To get a lower bound, we use that $\sum_{i=0}^{\min\{t, k+\ell\}} Q_i \geq Q_{\ell-1}$, that is we just consider $$ Q_{\ell-1} =  \binom{n-t}{k+1}\binom{t}{\ell-1}\binom{n}{k+\ell}^{-1}.$$
    Since the interleaving order $\ell$ is usually very small compared to $n$, we set 
    $$L= \lim_{n \to \infty} \frac{\ell(n)}{n} = \frac{T}{ \gamma}, $$ for some positive integer $2 <  \gamma.$
  \begin{proposition}
  The asymptotic complexity of Interleaved Prange on  an $\ell$-interleaved random code over $\mathbb{F}_q$ with length $n$ and dimension $k$ is given by at most $q^{ne(R,q)}$, where 
  \begin{align*}
      e(R,q) &= H(1,R+L) - H(1-T,R)-H(T,L)+ \min\{H(R+L,R),L\}.
  \end{align*}
  \end{proposition}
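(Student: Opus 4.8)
The plan is to start from the workfactor $\mathcal{O}(P^{-1}C)$ supplied by the theorem and to read off the exponential growth rate as
$$\lim_{n\to\infty}\frac{1}{n}\log_q\!\left(P^{-1}C\right)=-\lim_{n\to\infty}\frac{1}{n}\log_q P+\lim_{n\to\infty}\frac{1}{n}\log_q C.$$
These two summands are estimated separately: the first calls for a lower bound on the success probability $P$, the second for an upper bound on the per-iteration cost $C$. I expect the three binomial terms of $e(R,q)$ to originate from $P^{-1}$ and the final $\min$-term to originate from $C$.

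For the probability I would throw away all but a single summand of $P=\sum_i Q_i$. The convenient choice is $i=\ell-1$: since then $\E_{\cJ\cap\cT}\in\Fq^{\ell\times(\ell-1)}$ has strictly fewer columns than rows, its rows are automatically dependent, so the rank factor $1-\prod_{j=0}^{\ell-1}(1-q^{j-i})$ collapses to $1$ and $Q_{\ell-1}=\binom{n-t}{k+1}\binom{t}{\ell-1}\binom{n}{k+\ell}^{-1}$. Applying the binomial asymptotic $H(F,G)$ to each of the three factors with the limits $\tfrac{n-t}{n}\to 1-T$, $\tfrac{k+1}{n}\to R$, $\tfrac{t}{n}\to T$, $\tfrac{\ell-1}{n}\to L$ and $\tfrac{k+\ell}{n}\to R+L$ yields $\lim\tfrac1n\log_q Q_{\ell-1}=H(1-T,R)+H(T,L)-H(1,R+L)$, so that $-\lim\tfrac1n\log_q P\leq H(1,R+L)-H(1-T,R)-H(T,L)$. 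This is exactly the first three terms of $e(R,q)$, and this part should be routine once the rank factor is observed to be $1$.

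The crux is the cost term, and here I would look closely at the regime $L=T/\gamma>0$, in which the interleaving order $\ell$ grows \emph{linearly} in $n$. In that regime the left null space of $\G'_\cJ$ that is swept in the re-encoding loop can have dimension up to $\ell$, so its enumeration is no longer polynomial: a direct count of its nonzero vectors is $q^\ell-1$, contributing the exponent $L$. The same enumeration can alternatively be bounded combinatorially by the number of configurations of the $\ell$ ``extra'' coordinates inside the $(k+\ell)$-window, i.e.\ by $\binom{k+\ell}{\ell}$, contributing the exponent $H(R+L,R)$. Taking the sharper of the two crude bounds gives $\lim\tfrac1n\log_q C\leq\min\{H(R+L,R),L\}$, while the remaining contributions to $C$ (the Gaussian-elimination term $(k+\ell)^3$ and the single-re-encoding cost $\alpha$) are polynomial in $n$ and hence asymptotically negligible. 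Summing the two exponents then produces $e(R,q)=H(1,R+L)-H(1-T,R)-H(T,L)+\min\{H(R+L,R),L\}$.

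The hard part will be precisely this cost estimate rather than the probability estimate. One must argue that the null-space enumeration, which is genuinely exponential once $\ell$ scales with $n$, is captured up to polynomial factors by $\min\{q^\ell,\binom{k+\ell}{\ell}\}$, and in particular that the configurations in which only a few error positions fall into $\cJ$ — precisely those with large rank deficiency, and hence the largest null space — do not inflate the cost beyond this bound. I would therefore spend the verification effort on reconciling this with the per-iteration workfactor $C$ recorded in the theorem, identifying exactly which part of the re-encoding accounts for the $\min$-term and confirming that the combinatorial count yielding the exponent $H(R+L,R)$ is correct.
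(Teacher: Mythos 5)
Your treatment of the success probability is exactly the paper's argument: keep only the summand $Q_{\ell-1}$, observe that an $\ell\times(\ell-1)$ matrix automatically has linearly dependent rows so the rank factor equals $1$, and translate the three binomial coefficients in $Q_{\ell-1}=\binom{n-t}{k+1}\binom{t}{\ell-1}\binom{n}{k+\ell}^{-1}$ into $-\lim\frac1n\log_q P\leq H(1,R+L)-H(1-T,R)-H(T,L)$. That half is correct and matches the paper.

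The gap is in the cost term, and your guiding assumption there is the opposite of what the paper establishes. The paper does \emph{not} extract $\min\{H(R+L,R),L\}$ from the per-iteration cost: its derivation explicitly shows $\lim_{n\to\infty}\frac1n\log_q(C)=0$, because in the theorem's formula the enumeration enters only through $\sum_{p=1}^{\ell}q^{-p^2+p}\leq \ell$ --- the probability $P(p)$ of rank deficiency $p$ (conditioned on $\E_\cJ$ having independent rows, which is the situation in the non-final iterations) decays fast enough to swallow the $q^{p}$ re-encoding factor, so the \emph{expected} work per iteration is polynomial. The exponential behaviour in the paper's bound therefore comes from $P^{-1}$ alone; in fact the paper's derivation supports the bound \emph{without} the $\min$ term, which enters the stated proposition only as additional non-negative slack. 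Your version replaces this expected-cost analysis by a worst-case one, and that is where it breaks. The $L$ branch of your $\min$ is at least a legitimate worst-case bound, since $\rank(\G_\cJ)=k$ forces the left null space of $\G'_\cJ$ to have dimension at most $\ell$, hence at most $q^{\ell}$ vectors to sweep. But the $H(R+L,R)$ branch is unsupported: that null space has $q^{p}$ elements, and there is no reason --- and you give none beyond counting ``configurations of the $\ell$ extra coordinates'' --- why enumerating it should be bounded by $\binom{k+\ell}{\ell}$; for $p$ close to $\ell$ one has $q^{p}\gg\binom{k+\ell}{\ell}$ once $q$ is even moderately large, so the claimed combinatorial bound is false as stated. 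Since the proposition requires the \emph{minimum} of the two exponents, your argument fails exactly in the regime $H(R+L,R)<L$, and even in the other regime it proves the bound only by inserting a worst-case exponential factor that the paper's analysis shows is unnecessary.
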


\subsection{Comparison}\label{sec:comp}
In general, for small $\ell$, it would appear that CF-based algorithms have a lower complexity than SD-based algorithms because SD-based algorithms generally solve the problem for a larger error weight than CF-based ones (but only in a slightly larger code). 

 However, this comparison does not take into account the  large failure probabilities of these algorithms. For this reason we will not compare the cost of these algorithms with the SD-based and the Interleaved Prange algorithm. 
 In order to compare the different algorithms, we fix $q=7$ and $\ell$ the interleaving order to be such that $L= \lim_{n \to \infty} \frac{\ell(n)}{n} = T/20,$ i.e., $ \gamma=20.$
 In addition, we denote by $R^* = \text{argmax}_{0 \leq R \leq 1} \left( e(R,q) \right)$.
 We have two different approaches for the comparison. The first one is to take $\frac{1}{n}\log_q(\cdot)$ of the actual cost of the algorithms computed for large $n$, which seem to converge rather quickly, thus Figure \ref{plot direct}  and Table \ref{tabdir} gives a very accurate plot of the complexities in this case.
 
\begin{figure}[h!]
\begin{center}
    \includegraphics[width=\textwidth]{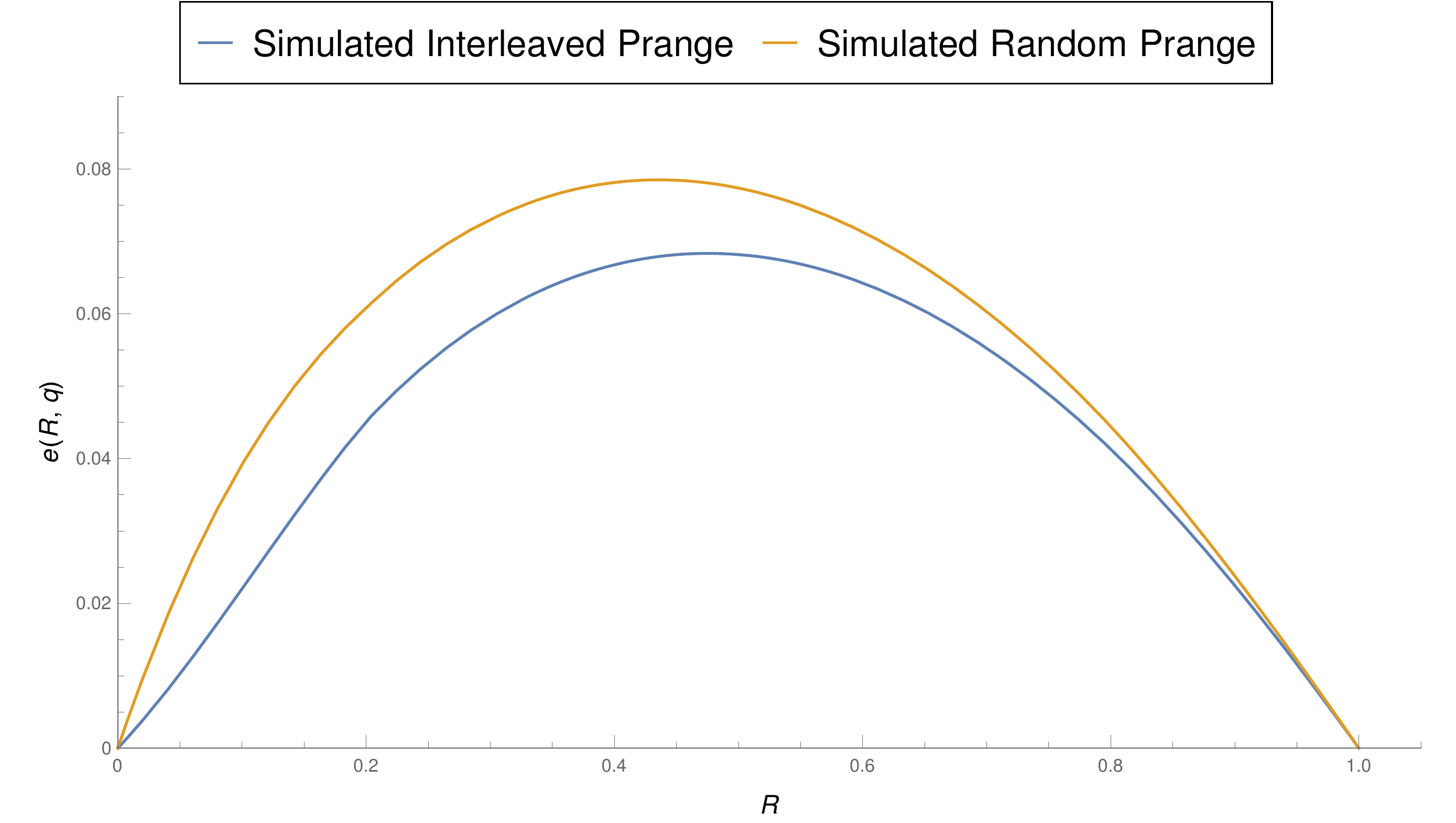}
\end{center}
\caption{Simulated asymptotic cost of the algorithms for $q=7$.}\label{plot direct}
\end{figure}

\renewcommand{\arraystretch}{1.5}
\begin{table}[h!]
 \begin{center}
 \begin{tabular}{|c|c|c|}
 \hline 
  Algorithm & $~~~e(R^*,q)~~~$ & $R^*$ \\\hline
 Simulated Interleaved Prange & 0.06832 & 0.475 \\
 Simulated Random Prange & 0.07848 & 0.437 \\
  \hline
  \end{tabular}
\end{center}  
  \caption{Comparison of simulated asymptotic cost of different algorithms for $q=7$.}\label{tabdir}
\end{table}

The second approach is using the presented upper bounds on the asymptotic complexity, which can be seen in Figure \ref{plotup} and Table \ref{tabup}.
\begin{figure}[h!]
\begin{center}
    \includegraphics[width=\textwidth]{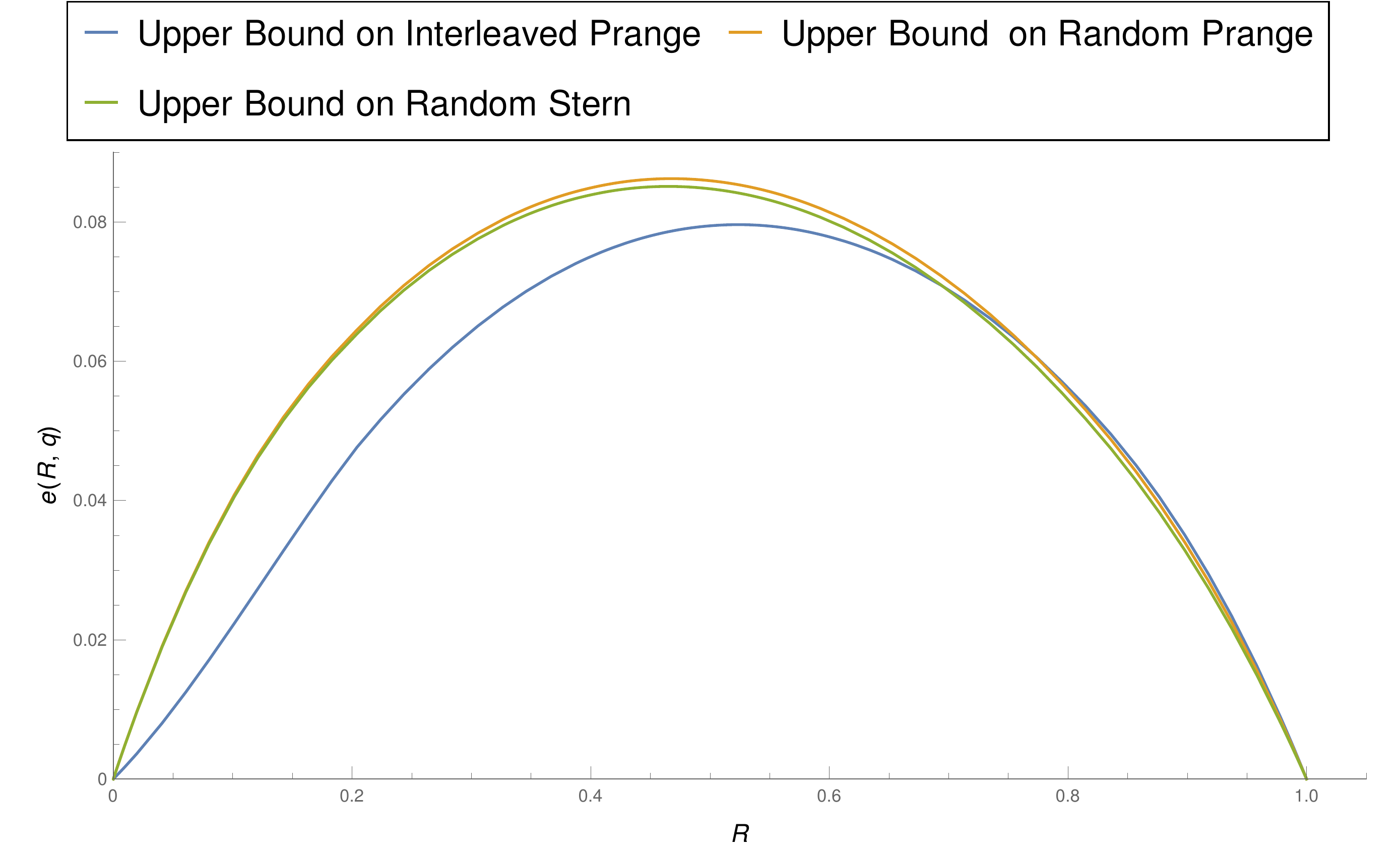}
\end{center}
\caption{Upper bounds on the asymptotic cost of the algorithms for $q=7$.}\label{plotup}
\end{figure}

\renewcommand{\arraystretch}{1.5}
\begin{table}[h!]
 \begin{center}
 \begin{tabular}{|c|c|c|}
 \hline 
  Algorithm & $~~~e(R^*,q)~~~$ & $R^*$ \\\hline
Upper Bound   Interleaved Prange & 0.07961 & 0.524 \\
  Upper Bound Random Prange & 0.08621 & 0.468 \\
 Upper Bound  Random Stern &0.08510 & 0.465\\
  \hline
  \end{tabular}
\end{center}  
  \caption{Comparison of upper bounds of asymptotic cost of different algorithms for $q=7$.}\label{tabup}
\end{table}

 Both approaches show the same predicted behaviour, that is Interleaved Prange has a much lower complexity than the straightforward approach. Note that in the simulated asymptotics we did not compare also to Random Stern, as the improvement on Random Prange is only marginal.

\section{Conclusion}\label{sec:concl}
In this paper we presented several algorithms that decode a random homogeneous $\ell$-interleaved code, which also work for the missing case $\ell \ll t$. Two of these algorithms come from a straight-forward reduction to known ISD and CF algorithms in the classical case. In addition to those algorithms, we also presented a new generic decoding algorithm for interleaved codes, namely Interleaved Prange, which is an adaption of Prange's classical idea to the interleaved setting. We provided a complexity analysis and compared the asymptotic costs of the considered algorithms. 

\subsubsection*{Acknowledgements}
The sixth author  is  supported by the Swiss National Science Foundation grant number 195290.

\bibliographystyle{plain}
\bibliography{main}

\begin{thebibliography}{10}

\bibitem{rollo}
Carlos Aguilar~Melchor, Nicolas Aragon, Magali Bardet, Slim Bettaieb, Loic
  Bidoux, Olivier Blazy, Jean-Christophe Deneuville, Philippe Gaborit, Adrien
  Hauteville, Ayoub Otmani, Olivier Ruatta, Jean-Pierre Tillich, and Gilles
  Z{\'e}mor.
\newblock R{OLLO}- {R}ank-{O}uroboros, {LAKE} \& {LOCKER}.
\newblock {\em NIST PQC Call for Proposals}, 2020.
\newblock Round 2 Submission.

\bibitem{rqc}
Carlos Aguilar~Melchor, Nicolas Aragon, Slim Bettaieb, Loic Bidoux, Olivier
  Blazy, Maxime Bros, Alain Couvreur, Jean-Christophe Deneuville, Philippe
  Gaborit, Adrien Hauteville, and Gilles Z{\'e}mor.
\newblock Rank {Q}uasi-{C}yclic ({RQC}).
\newblock {\em NIST PQC Call for Proposals}, 2020.
\newblock Round 2 Submission.

\bibitem{classicMC}
Martin~R. Albrecht, Daniel~J. Bernstein, Tung Chou, Carlos Cid, Jan Gilcher,
  Tanja Lange, Varun Maram, Ingo von Maurich, Rafael Misoczki, Ruben
  Niederhagen, Kenneth~G. Paterson, Edoardo Persichetti, Christiane Peters,
  Peter Schwabe, Nicolas Sendrier, Jakub Szefer, Cen~Jung Tjhai, Martin
  Tomlinson, and Wen Wang.
\newblock Classic {McE}liece: {C}onservative {C}ode-{B}ased {C}ryptography.
\newblock {\em NIST PQC Call for Proposals}, 2020.
\newblock Round 3 Submission.

\bibitem{barg}
Alexander Barg.
\newblock Some new {NP}-complete coding problems.
\newblock {\em Problemy Peredachi Informatsii}, 30(3):23--28, 1994.

\bibitem{np}
Elwyn Berlekamp, Robert McEliece, and Henk Van~Tilborg.
\newblock On the inherent intractability of certain coding problems (corresp.).
\newblock {\em IEEE Transactions on Information Theory}, 24(3):384--386, 1978.

\bibitem{thomas}
Andr{\'e} Chailloux, Thomas Debris-Alazard, and Simona Etinski.
\newblock Classical and quantum algorithms for generic syndrome decoding
  problems and applications to the {L}ee metric.
\newblock In {\em International Conference on Post-Quantum Cryptography}, pages
  44--62. Springer, 2021.

\bibitem{NIST}
Lily Chen, Lily Chen, Stephen Jordan, Yi-Kai Liu, Dustin Moody, Rene Peralta,
  Ray Perlner, and Daniel Smith-Tone.
\newblock {\em Report on post-quantum cryptography}, volume~12.
\newblock US Department of Commerce, National Institute of Standards and
  Technology, 2016.

\bibitem{IG}
Molka Elleuch, Antonia Wachter-Zeh, and Alexander Zeh.
\newblock A public-key cryptosystem from interleaved {G}oppa codes.
\newblock {\em arXiv preprint arXiv:1809.03024}, 2018.

\bibitem{haslach1999decoding}
Christoph Haslach and AJ~Han Vinck.
\newblock A decoding algorithm with restrictions for array codes.
\newblock {\em IEEE Transactions on Information Theory}, 45(7):2339--2344,
  1999.

\bibitem{haslach2000efficient}
Christoph Haslach and AJ~Han Vinck.
\newblock Efficient decoding of interleaved linear block codes.
\newblock In {\em 2000 IEEE International Symposium on Information Theory (Cat.
  No. 00CH37060)}, page 149. IEEE, 2000.

\bibitem{holzbaur2019decoding}
Lukas Holzbaur, Hedongliang Liu, Sven Puchinger, and Antonia Wachter-Zeh.
\newblock On decoding and applications of interleaved {G}oppa codes.
\newblock In {\em 2019 IEEE International Symposium on Information Theory
  (ISIT)}, pages 1887--1891. IEEE, 2019.

\bibitem{leeZ4}
Anna-Lena Horlemann-Trautmann and Violetta Weger.
\newblock Information set decoding in the {L}ee metric with applications to
  cryptography.
\newblock {\em Advances in Mathematics of Communications}, 15(4), 2021.

\bibitem{metzner1990general}
John~J Metzner and Edward~J Kapturowski.
\newblock A general decoding technique applicable to replicated file
  disagreement location and concatenated code decoding.
\newblock {\em IEEE Transactions on Information Theory}, 36(4):911--917, 1990.

\bibitem{round3}
Dustin Moody.
\newblock The beginning of the end: The first {NIST} {PQC} standards, 2022.
\newblock
  \url{https://csrc.nist.gov/Presentations/2022/the-beginning-of-the-end-the-first-nist-pqc-standa}.

\bibitem{otmani2011efficient}
Ayoub Otmani and Jean-Pierre Tillich.
\newblock An efficient attack on all concrete {KKS} proposals.
\newblock In {\em International Workshop on Post-Quantum Cryptography}, pages
  98--116. Springer, 2011.

\bibitem{prange}
Eugene Prange.
\newblock The use of information sets in decoding cyclic codes.
\newblock {\em IRE Transactions on Information Theory}, 8(5):5--9, 1962.

\bibitem{IR}
Julian Renner, Sven Puchinger, and Antonia Wachter-Zeh.
\newblock Interleaving {L}oidreau’s rank-metric cryptosystem.
\newblock In {\em 2019 XVI International Symposium" Problems of Redundancy in
  Information and Control Systems"(REDUNDANCY)}, pages 127--132. IEEE, 2019.

\bibitem{roth2014coding}
Ron~M Roth and Pascal~O Vontobel.
\newblock Coding for combined block--symbol error correction.
\newblock {\em IEEE transactions on information theory}, 60(5):2697--2713,
  2014.

\bibitem{stern}
Jacques Stern.
\newblock A method for finding codewords of small weight.
\newblock In {\em International Colloquium on Coding Theory and Applications},
  pages 106--113. Springer, 1988.

\bibitem{leenp}
Violetta Weger, Karan Khathuria, Anna-Lena Horlemann, Massimo Battaglioni,
  Paolo Santini, and Edoardo Persichetti.
\newblock On the hardness of the {L}ee syndrome decoding problem.
\newblock {\em Advances in Mathematics of Communications}, 2022.

\end{thebibliography}
\end{document}